\numberwithin{equation}{section}
\theoremstyle{plain} \newtheorem{theorem}{Theorem}[section]
\theoremstyle{plain} 
\theoremstyle{plain} 
\theoremstyle{plain} \newtheorem{proposition}[theorem]{Proposition}
\theoremstyle{remark} 
\theoremstyle{definition} 
\theoremstyle{definition} 
\theoremstyle{remark}
\newcommand{\alp}{\alpha}
\newcommand{\be}{\beta}
\renewcommand{\Pr}{ \mathbb P}
\newcommand{\lk}{\lambda_k}
\newcommand{\lkm}{\lambda_{k-1}}
\newcommand{\lz}{\lambda_0}
\newcommand{\lu}{\lambda^{\uparrow}}
\newcommand{\lr}{\lambda^{\rightarrow}}
\newcommand{\bP}{\mathbb{P}}
\newcommand{\bR}{\mathbb{R}}
\newcommand{\cP}{\mathcal{P}}
\begin{document}

\title[Correlated default]{Non-existence of Markovian time dynamics \\
for graphical models of correlated default}

\author{Steven N. Evans} 

\address{Department of Statistics\\
         University  of California\\ 
         367 Evans Hall \#3860\\
         Berkeley, CA 94720-3860 \\
         U.S.A.}

\email{evans@stat.berkeley.edu}
\thanks{SNE supported in part by NSF grants DMS-0405778 and DMS-0907630}

\author{Alexandru Hening} 

\address{Department of Mathematics\\
         University  of California\\
         970 Evans Hall \#3840\\
         Berkeley, CA 94720-3840\\
         U.S.A.}

\email{ahening@math.berkeley.edu}

%\subjclass[2000]{???}

\keywords{credit risk, Ising model, reduced form model, collateralized debt obligation}

\date{\today}

\begin{abstract}
Filiz et al. (2008) proposed a model for the pattern of defaults seen among a group of firms
at the end of a given time period.  The ingredients in the model are a graph $G = (V,E)$, where
the vertices $V$ correspond to the firms and the edges $E$ describe
the network of interdependencies between the firms, a parameter for each vertex that 
captures the individual propensity of that firm to default, 
and a parameter for each edge that captures the joint propensity of the
two connected firms to default.  
The correlated default model can be re-rewritten as a standard
 Ising model on the graph by identifying the set of
defaulting firms in the default model with the set of sites
in the Ising model for which the $\{\pm 1\}$-valued spin is $+1$.  We ask whether there is
a suitable continuous time Markov chain $(X_t)_{t \ge 0}$ taking values in the subsets of $V$
such that $X_0 = \emptyset$, $X_r \subseteq X_s$ for $r \le s$ (that is, once a firm defaults it stays
in default), the distribution of $X_T$ for some fixed time $T$ is the
one given by the default model, and the distribution of $X_t$ for other times $t$ is described
by a probability distribution in the same family as the default model.  In terms of the equivalent Ising model,
this corresponds to asking if it is possible
to begin at time $0$ with a configuration in which every spin is $-1$
and then flip spins one at a time from $-1$ to $+1$ according to Markovian dynamics
so that the configuration of spins at each time is described by some Ising model
and at time $T$ the configuration is distributed according to the prescribed Ising model.  
We show for three simple but
financially natural special cases that this is not possible outside of the trivial case where
 there is complete independence between the firms.
\end{abstract}

\maketitle

\section{Introduction}

Investors are exposed to {\em credit risk} due to the possibility that
one or more counterparties in a financial agreement will {\em default};
that is, not honor their obligations to make certain payments.  
Some examples of default
are a consumer or business not making a due payment on a loan,
a manufacturer or retailer not paying  for goods already received
from a supplier, a bond issuer not making coupon or principal payments,
or an insolvent financial institution not returning deposited
funds to its customers upon demand.

Some credit risk is present in virtually 
any financial agreement,
and a key ingredient in its satisfactory
management is a model that produces a sufficiently
accurate probability for a given default event. 
Consequently, there is a large theoretical and applied literature
on this topic \cite{BR02, BOW02, DS03, dSO04, Gie04, Sch04, ZP07, Wag08}.  
Roughly speaking, models of default lie on a spectrum
between the {\em structural} and {\em reduced form} ones.  For the
example of a firm defaulting on its debt obligations, a structural
model might include explicit descriptions of the dynamics of the firm's assets, capital holdings and debt structure,
whereas a reduced form model would not seek to incorporate
the details of the actual mechanism by which the firm is led to
default but rather it might typically be something
of a ``black box'' that treats the time of default as a random
time with an associated exogenous intensity process having a
rather simple structure characterized by a small number of parameters which may have little direct economic interpretation.
Although structural models are perhaps theoretically
more satisfying because in principle they provide a means of testing how
well the factors that cause default are understood, they
are often perceived as being too complex and parameter-rich
for them to be fitted adequately: defaults
are uncommon and even firms within the same
sector of the economy can be quite heterogeneous,
so there can be insufficient ``independent replication''
upon which to base statistically sound parameter estimates.

The difficulty of modeling default probabilities is compounded for
complex financial instruments such as {\em collateralized debt obligations} (CDOs)
and other structured asset-backed securities that are constructed
by, in essence, bundling together a group of borrowers. It is
then no longer sufficient to determine the default probability
for a single ``firm'' --  rather, it becomes necessary to model the joint
probabilities that various subsets of firms in the
basket will default, and it is
usually not appropriate to treat the defaults of different
firms as statistically independent events.  The most
obvious reason for this absence of independence is that
firms are subject to the same background economic environment.
Moreover, situations such as the interconnectedness between
manufacturers, their parts suppliers and the retailers who
sell their products can cause problems for one firm in such a network
to spread to others via a process that 
is usually described as {\em contagion}.  A small sampling
of the substantial empirical and modeling work on this phenomenon
of {\em correlated default} is 
\cite{DV01a, DV01b, HW01, SS01, Zho01, FMcN03, Gie03,
LG03, Gie04b, GW04, GW06, DDKS07, EGG07, JZ07, Yu07, ES09, GGD09, CDGH10}.
 
In this paper we investigate a particularly appealing class of models for 
correlated default in \cite{FGMS08} (see also
\cite{MV05, KMH06}, where special cases of this
model were introduced).  The basic model in \cite{FGMS08}
does not attempt to describe the time course of defaults for
some group of firms.  Rather, it is a {\em one period} model
that gives the probability any given subset of the firms will have
defaulted at some time during a prescribed time interval.

The ingredients of the model in \cite{FGMS08} are a finite
(undirected, simple) graph $G$ with vertex set $V$ and edge set $E$
and two vectors of parameters $\alpha = (\alpha_v)_{v \in V} \in \bR^V$
and $\beta = (\beta_e)_{e \in E} \in \bR^E$.  Each vertex $v \in V$
represents a firm and the graph structure provided by the edges
is intended to capture the network of interdependencies between
the firms.  Write $I_v$, $v \in V$, for the indicator random
variable of the event that firm $v$ defaults; that is,
$I_v$ takes the value $1$ if firm $v$ defaults and the value
$0$ otherwise.  The probability of a given pattern
$\varepsilon = (\varepsilon_v)_{v \in V} \in \{0,1\}^V$ of defaults is
\begin{equation}
\label{E:model_def}
\bP\{I_v = \varepsilon_v, \; v \in V\}
:=
\frac{1}{Z} \exp(H(\varepsilon)),
\end{equation}
where the {\em Hamiltonian} $H$ is given by
\begin{equation}
\label{E:Hamiltonian_def}
H(\varepsilon) := 
\sum_{u \in V} \alpha_u \varepsilon_u
+
\sum_{\{v,w\} \in E} \beta_{\{v,w\}} \varepsilon_v \varepsilon_w
\end{equation}
and the {\em partition function} $Z$ is  the
normalizing constant that ensures the sum over $\{0,1\}^V$ 
of the probabilities is one.  The parameter $\alpha_u$, $u \in V$,
is clearly some measure of the individual propensity of
firm $u$ to default. The parameter $\beta_{\{v,w\}}$,
$\{v,w\} \in E$, captures in some way the dependence between
the defaults of firm $v$ and firm $w$: if this parameter
is positive, then the joint default of both firms is favored,
whereas it is discouraged when the parameter is negative.
We write $\cP(G,\alpha,\beta)$ for the distribution of 
the random binary vector $I$.

Note that if we set $Y_v = 2 I_v - 1$, $v \in V$, then
$(Y_v)_{v \in V} \in \{\pm 1\}^V$ and for 
$(\sigma_v)_{v \in V} \in \{\pm 1\}^V$ we have
\[
\begin{split}
& \bP\{Y_v = \sigma_v, \; v \in V\}
 =
\bP\{I_v = (1+\sigma_v)/2, \; v \in V\} \\
& \quad =
\frac{1}{Z}
\exp
\left(
\frac{1}{2} \sum_{u \in V} \alpha_u (1+\sigma_u)
+
\frac{1}{4} \sum_{\{v,w\} \in E} \beta_{\{v,w\}} (1+\sigma_v)(1+\sigma_w)
\right) \\
& \quad =
\frac{1}{\tilde Z}
\exp
\left(
\sum_{u \in V} \gamma_u \sigma_u
+
\sum_{\{v,w\} \in E} \delta_{\{v,w\}} \sigma_v \sigma_w
\right) \\
\end{split}
\]
for suitable parameters 
$(\gamma_v)_{v \in V} \in \bR^V$
and $(\delta_e)_{e \in E} \in \bR^E$ and a corresponding normalization
constant $\tilde Z$.  Thus, the random
vector of {\em spins} $(Y_v)_{v \in V}$ is described by the usual
{\em Ising model} associated with the graph $G = (V,E)$.

It is shown in \cite{FGMS08} that this class of correlated default models
is as flexible as one could possibly hope:
if $J$ is an arbitrary
$\{0,1\}^V$-valued random variable, then there is a choice
of the parameters $(\alpha_v)_{v \in V}$
and $(\beta_e)_{e \in E}$ such that 
$I_u$ has the same distribution as $J_u$ for all $u \in V$
and for all $\{v,w\} \in E$ 
the pair $(I_v, I_w)$ has the same
distribution as $(J_v, J_w)$.  Moreover, it is observed
in \cite{FGMS08} that it is possible to fit such a model to
data using existing techniques such as iterative proportional
fitting, various convex optimization techniques, or a number of
other ``off-the-shelf'' numerical optimization methods suitable
for large-scale computation.

A significant drawback of the class of models in \cite{FGMS08}
is that they don't provide a description of the time dynamics
of default: they just give the probability that a given subset
of firms have defaulted during some fixed time period without
saying anything about the distribution of the times at which
the defaults occurred.  If we let $[0,T]$ be the time period of interest,
then we would like there to be a $\{0,1\}^V$-valued stochastic process 
$(I(t))_{0 \le t \le T}$ such 
that
\begin{itemize}
\item
$I_v(t)=1$ if and only if firm $v \in V$ has defaulted by time $t$,
so that $I_v(0) = 0$ and the sample paths of $(I_v(t))_{0 \le t \le T}$ 
are right-continuous and non-decreasing (once a firm defaults it does not
``undefault''),
\item
$\#\{v \in V : I_v(t) \ne I_v(t-)\} \le 1$ for any $t \in [0,T]$ (two or more
firms do not default simultaneously), 
\item
$I(T)$ is has distribution $\cP(G,\alpha,\beta)$.
\end{itemize}
Furthermore, since $\cP(G,\alpha,\beta)$ is supposed to be an appropriate
description for the pattern of defaults during $[0,T]$, it is reasonable
to require that
\begin{itemize}
\item
$I(t)$ has distribution $\cP(G,\alpha(t),\beta(t))$ for suitable
parameters $\alpha(t)$ and $\beta(t)$ when $0 < t < T$.
\end{itemize}

In this paper we investigate whether such a process exists within the simplest
and perhaps most natural class of models, namely the time-homogeneous Markov chains.
Re-cast in the language of the equivalent Ising model, 
we are thus asking if it is possible
to begin at time $0$ with a configuration in which every spin is $-1$
and then flip spins one at a time from $-1$ to $+1$ according to Markovian dynamics
so that the configuration of spins at time $T$ is distributed according to
a prescribed Ising model and at all other times the configuration is described by some Ising model.

We can certainly construct such a chain if $\beta = 0$, so that
$\cP(G,\alpha,\beta) = \cP(G,\alpha,0)$ is the distribution of a vector $(I_v)_{v \in V}$
of independent $\{0,1\}$-valued random variables with
\[
\bP\{I_v = 0\} = \frac{1}{1 + \exp(\alpha_v)}.
\]
We simply takes the processes $(I_v(t))_{t \ge 0}$ to be independent,
with
\[
\bP\{I_v(t) = 0\} = \exp(-\lambda_v t),
\]
where the jump rate $\lambda_v$ is chosen so that
\[
\exp(-\lambda_v T) = \frac{1}{1 + \exp(\alpha_v)}.
\]
Thus, $\lambda_v = \frac{1}{T} \log(1 + \exp(\alpha_v))$ and
$I(t)$ has distribution $\cP(G, \alpha(t), 0)$, where
\[
\frac{1}{1 + \exp(\alpha_v(t))} 
= \exp(-\lambda_v t) 
= \exp\left(-\frac{t}{T} \log(1 + \exp(\alpha_v))\right),
\]
so that
\[
\alpha_v(t) = \log\left((1 + \exp(\alpha_v))^{\frac{t}{T}} - 1\right)
\]
for $0 < t < T$.

After establishing some general facts in Section~\ref{S:generalities}, we investigate
in Sections~\ref{S:Model_I}, \ref{S:Model_II} and \ref{S:Model_III}
whether it is possible to construct a time-homogeneous
Markov chain for non-zero $\beta$ in the 
following cases:
\begin{itemize}
\item[(I)]
$G$ is the complete graph $K_N$ in which there are $N$ vertices with each
vertex connected to every other one, $\alpha_{u} = \alpha_{v}$ for $u,v \in V$
and $\beta_{e} = \beta_{f}$ for $e, f \in E$;
\item[(II)]
$G$ is the complete bipartite graph $K_{M,N}$ in which $V$ is partitioned
into two disjoint subsets $\hat V$ and $\check V$ of cardinality $M$ and $N$ such
that every vertex in $\hat V$ is connected to every vertex in $\check V$ and there
are no other edges, $\alpha_{u} = \alpha_{v}$ for $u,v \in V$
and $\beta_{e} = \beta_{f}$ for $e, f \in E$;
\item[(III)]
$G$ is again the complete bipartite graph $K_{M,N}$, 
$\alpha_{u} = \alpha_{v}$ for $u,v \in \hat V$,
$\alpha_{u} = \alpha_{v}$ for $u,v \in \check V$
and $\beta_{e} = \beta_{f}$ for $e, f \in E$.
\end{itemize}
In Model I there is complete symmetry: each firm has the same individual
propensity to default and the interdependence between any two firms
is the same as that between any other two.  Model II and III both
describe a situation in which there are two types of firms (say,
for example, car manufacturers and auto parts suppliers) and there
is only interdependence between firms of different types.  In Model II
all firms have the same individual propensity to default, whereas
in Model III this propensity can depend on the type of the firm.

We conclude in all three cases (with a minor technical restriction
for Model III) that it is impossible to construct
a time-homogeneous Markov chain with the desired properties unless
$\beta$ is zero; that is, unless the firms behave independently.

\section{Generalities}
\label{S:generalities}

It will be notationally more convenient to identify
a vector $\varepsilon = (\varepsilon_v)_{v \in V} \in \{0,1\}^V$
with the subset $A = \{v \in V : \varepsilon_v = 1\} \subseteq V$
and regard $\cP(G, \alpha, \beta)$ as a probability measure
on subsets of $V$ rather than $\{0,1\}^V$.  If we extend the definition
of $\beta_{\{u,v\}}$ by declaring that $\beta_{\{u,v\}} = 0$ when
$\{u,v\} \notin E$ and write $\beta_{\{u,v\}}$ more simply
as $\beta_{uv}$, then our Hamiltonian, now thought of as function
defined on subsets of $V$, is given by
\begin{equation}
\label{e1}
H(A):= \sum_{u\in A} \alpha_u + \sum_{\{u,v\} \subset A} \beta_{uv}.
\end{equation}
If we write $\bP^H$ for the probability measure $\cP(G, \alpha, \beta)$, then
\begin{equation}\label{e2}
\Pr^H(\{A\})
:=
\frac{1}{Z} \exp(H(A)),
\end{equation}
where
\[
Z 
:= 
\sum_{B \subseteq V} \exp(H(B)).
\]

We are interested in the existence of a time-homogeneous Markov chain
$X = (X_t)_{t \ge 0}$ that has as its state-space the collection of
subsets of $V$ and has the following properties, where we write
$Q(A,B)$ for the jump rate from state $A$ to state $B$:
\begin{itemize}
\item 
$Q(A,B) = 0$ unless $B = A \cup \{v\}$ for some $v \notin A$;
\item 
when $X(0) = \emptyset$, the distribution of $X(T)$ is $\Pr^H$;
\item
there are parameter vectors $\alpha(t)$ and $\beta(t)$ for $0 < t \le T$ such that
if we set
\[
H_t(A):= \sum_{u\in A} \alpha_u(t) + \sum_{\{u,v\} \subset A} \beta_{uv}(t),
\]
then $X(t)$ has distribution
$\Pr^{H_t}$ when $X(0)=\emptyset$.
\end{itemize}
If such a Markov chain exists, we say that the default model {\em admits
time-homogeneous Markovian dynamics}.

Write $A \rightarrow B$ if $B = A \cup \{v\}$ for some $v \notin A$.
The Kolmogorov forward equations for the chain $X$ with initial
state $\emptyset$ become
\begin{equation}\label{e3}
\frac{d}{dt} \Pr^{H_t}(B)
=
\sum_{A\rightarrow B} \Pr^{H_t}(A) Q(A,B) + \Pr^{H_t}(B) Q(B,B),
\end{equation}
where, as usual, we put $Q(B,B) := - \sum_{C \ne B} Q(B,C)$.

Denoting the partition function associated with the
Hamiltonian $H_t$ by  $Z_t := \sum_{C\in E} e^{H_t(C)}$, we have
\[
\begin{split}
\frac{d}{dt} \Pr^{H_t}(B)
& = 
\frac{d}{dt} \frac{e^{H_t(B)}}{Z_t} \\
& =
\frac{
Z_t e^{H_t(B)} H_t'(B) 
- Z_t' e^{H_t(B)}
}
{
Z_t^2
},\\
\end{split}
\]
and thus \eqref{e3} can be re-written as
\begin{equation}
\label{e4}
Z_t e^{H_t(B)} H_t'(B) 
- Z_t' e^{H_t(B)}
=
\sum_{A\rightarrow B} Q(A,B) e^{H_t(A)} Z_t 
+ 
Q(B,B) e^{H_t(B)} Z_t.
\end{equation}

To further simplify notation, set $R_B = - Q(B,B)$. 
Because $H_t(\emptyset) = 0$,
equation \eqref{e4} for $B=\emptyset$ is simply
\begin{equation*}
- Z_t' = - R_{\emptyset} Z_t.
\end{equation*}
We require 
\[
1 
= \lim_{t \downarrow 0} \Pr^{H_t}(\emptyset)
= \lim_{t \downarrow 0} \frac{1}{Z_t},
\]
and so
\begin{equation}
\label{e5}
Z_t = e^{R_\emptyset t}.
\end{equation}
Substituting \eqref{e5} into \eqref{e4} gives
\begin{equation}
H_t'(B) 
=
\sum_{A\rightarrow B} Q(A,B) e^{H_t(A) - H_t(B)} 
+ R_\emptyset 
- R_B,
\end{equation}
and hence
\begin{equation}
\label{e6}
\begin{split}
& \sum_{u\in B} \alpha_u'(t) \, + \sum_{\{u,v\} \subseteq B} \beta_{uv}'(t) \\
& \quad = 
\sum_{u \in B} Q(B \setminus \{u\},B ) 
\exp\left(-\alpha_u(t) - \sum_{v\in B \setminus \{u\}}\beta_{uv}(t)\right) \, + \, R_\emptyset \, - \, R_B. \\
\end{split}
\end{equation}

%In the above we assumed that the functions $\ai (t)$ and $\bij (t)$ are \textit{nice}.
%The existence result is then equivalent to whether there exist functions $\ai (t)$, $\bij (t) ~(i,j\in [N])$ satisfying for all $B\in E$
%\newline
%\begin{equation} \label{e7}
%\left\lbrace
% 	\begin{aligned}
%		\sum_{i\in B} \ai '(t) + \sum_{i,j\in B} \bij '(t) - R_\emptyset &= \sum_i Q(B\setminus \{i\},B ) e^{-\ai (t) - \sum_{j\neq i, j\in B}\bij (t)} - r_B\\
%		\bij (T) &= \bij\\
%        \ai (T) &= \ai\\
%        \ai (0) &= -\infty
%	\end{aligned}
% \right.
% \end{equation}

For $u \in V$, set $Q_u = Q(\emptyset, \{u\})$ and $R_u := R_{\{u\}} = - Q(\{u\}, \{u\})$.
Equation \eqref{e6} for $B=\{u\}$ is
\begin{equation}
\label{e8}
\alpha_u '(t) = Q_u e^{-\alpha_u (t)} + R_\emptyset - R_u.
\end{equation}
Hence, by the method of variation of parameters (also called
variation of constants),
\begin{equation}
\label{e10}
\alpha_u(t) = \log\left(\frac{Q_u}{R_\emptyset-R_u}\left(e^{(R_\emptyset-R_u)t}-1\right)\right)
\end{equation}
and
\begin{equation}
\label{e10.5}
\alpha_u'(t) 
= 
\frac{R_\emptyset - R_u}
{1 - e^{-(R_\emptyset - R_u)t}}
\end{equation}
when $R_\emptyset \ne R_u$.  If $R_\emptyset = R_u$, then
\begin{equation}
\label{e10'}
\alpha_u(t) = \log\left(Q_u t\right)
\end{equation}
and
\begin{equation}
\label{e10.5'}
\alpha_u'(t) 
= 
\frac{1}{t}.
\end{equation}
Note that each function $\alpha_u$, $u \in V$, is completely determined by the rates 
$Q_u = Q(\emptyset, \{u\})$, $R_\emptyset = \sum_{v \in V} Q(\emptyset,\{v\})$,
and $R_u = \sum_{v \in V \backslash \{u\}} Q(\{u\}, \{u,v\})$, and hence
the vector of functions $(\alpha_u)_{u \in V}$ is completely determined by the
collection of rates 
$\{Q(\emptyset, \{u\}): u \in V\} \cup \{Q(\{u\}, \{u,v\}): u,v \in V, \, u \ne v\}$.

For $u,v \in V$, set 
$Q_{uv}:=Q({\{u\},\{u,v\}})$ and $R_{uv} := R_{\{u,v\}} = - Q(\{u,v\}, \{u,v\})$.
Equation \eqref{e6} for $B=\{u,v\}$ is, upon substituting from \eqref{e10.5},
\begin{equation}
\label{e9}
\begin{split}
\beta_{uv}'(t) 
& = Q_{vu} e^{-\alpha_u (t) - \beta_{uv} (t)} 
+   Q_{uv} e^{-\alpha_v (t) - \beta_{vu} (t)} \\
& \quad - \alpha_u'(t) - \alpha_v'(t) - R_{uv} + R_\emptyset \\
& = \frac{Q_{vu}}{Q_u} \frac{R_\emptyset - R_u}{1 - e^{(R_\emptyset - R_u)t}} e^{- \beta_{uv} (t)} 
+   \frac{Q_{uv}}{Q_v} \frac{R_\emptyset - R_v}{1 - e^{(R_\emptyset - R_v)t}} e^{- \beta_{vu} (t)} \\
& \quad - \frac{R_\emptyset - R_u}{1 - e^{-(R_\emptyset - R_u)t}} 
        - \frac{R_\emptyset - R_v}{1 - e^{-(R_\emptyset - R_v)t}}  \\
& \quad  + R_\emptyset -  R_{uv} \\
\end{split}
\end{equation}
when $R \ne R_u$ and $R \ne R_v$.  Analogous results hold when $R = R_u$ or $R = R_v$.
Recall that $\beta_{uv}(t) = \beta_{\{u,v\}}(t) = \beta_{vu}(t)$, and so \eqref{e9} is 
an ordinary differential equation for the function $\beta_{uv}$ if we treat the rates
of the Markov chain as given.  In particular, the two vectors of functions 
$(\alpha_u)_{u \in V}$ and
$(\beta_{uv})_{u,v \in V, u \ne v}$ are completely determined 
by the collection of rates 
$\{Q(\emptyset, \{u\}): u \in V\} 
\cup \{Q(\{u\}, \{u,v\}): u,v \in V, \, u \ne v\}
\cup \{Q(\{u,v\}, \{u,v,w\}): u,v,w \in V, \, u \ne v \ne w \ne u\}$.

In principle, we could attempt to find values for these rates such that 
$(\alpha_u(T))_{u \in V}$ and
$(\beta_{uv}(T))_{u,v \in V, u \ne v}$
have the required value, substitute the resulting values of $\alpha_u(t)$
and $\beta_{uv}(t)$ into \eqref{e6} (using \eqref{e10.5} or \eqref{e10.5'} 
for the values of $\alpha_u'(t)$
and \eqref{e9}
or its analogues when $R = R_u$ or $R = R_v$  for the values of $\beta_{uv}'(t)$)
and hope to either find values for the remaining rates so that \eqref{e6} holds
for all $B \subseteq V$ or show that this is impossible
no matter what our initial choice of rates was.  This seems to be a rather forbidding task
in general, but we are able to carry it out in the three special cases described in the Introduction.

%Using \eqref{e8}, \eqref{e9}, \eqref{e10} and setting $b_i:= R_\emptyset - r_i$, $c_{uv}:=R_\emptyset-r_{uv}-r_i-r_j $ the equations that have to be satisfied by the $\beta_{uv}$'s for any $B\in E$ are
%\begin{eqnarray}\label{e11}
%\sum_{i\in B}\left((2-|B|)\frac{b_i}{e^{b_it}-1} + b_i \right) &+& \sum_{i,j\in B, i< j} \left( \frac{Q_{vu}}{q_{0i}}\frac{b_i}{e^{b_it}-1}e^{-\beta_{uv}(t)} + \frac{q_{uv}}{q_{0j}}\frac{b_j}{e^{b_jt}-1}e^{-\beta_{ji}(t)} -c_{uv} \right)  \nonumber \\
%&=& \sum_{i\in B} Q(B\backslash\{i\},B)\frac{b_i}{e^{b_it}-1}e^{-\sum_{j\in B, j\neq i}\beta_{uv}(t)}
%\end{eqnarray}
%The general equations above might be too complicated to tackle. Therefore we look for solutions in three particular cases, where a certain symmetry is present.

\section{Model I: complete symmetry}
\label{S:Model_I}

Recall Model I from the Introduction.  The graph $G$ is $K_N$, the complete graph on $N$
vertices for some $N$, and there are functions $\alpha$ and $\beta$ such that
\begin{equation} \label{e12}
\left\lbrace
 	\begin{aligned}
		\alpha_u (t) &= \alpha (t) ~\text{for all}~ u \in V\\
		\beta_{uv} (t) &= \beta (t)~\text{for all}~ u,v \in V \, u \neq v.
	\end{aligned}
 \right.
 \end{equation}
 
\begin{proposition} 
\label{P:Model_I}
Model I with $N \ge 4$ admits time-homogeneous
Markovian dynamics if and only if the firms default independently.
\end{proposition}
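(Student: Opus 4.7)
The plan is to use the complete symmetry of Model~I to reduce equation~\eqref{e6} to a small system of algebraic constraints in one unknown, and then to show that this system admits only the trivial solution once $N\ge 4$.

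The first step is to argue that the aggregate sums entering \eqref{e6} depend only on the size of the subset. Applying \eqref{e6} to two subsets $B,B'$ of the same cardinality $k$ (so that the left-hand sides agree) gives
\[
e^{-\alpha(t)-(k-1)\beta(t)}\bigl(S_B - S_{B'}\bigr) = R_B - R_{B'},
\]
where $S_B := \sum_{u\in B} Q(B\setminus\{u\},B)$. If the exponential on the left were constant in $t$, then substituting back into the $k$-instance of \eqref{e6} would force $\alpha'(t)$ to be constant, contradicting the explicit forms \eqref{e10.5} and \eqref{e10.5'}. Hence the exponential is non-constant, so $S_B = S_{B'}$ and $R_B = R_{B'}$; write these common values as $S_k$ and $R_k$, and introduce $q_k$ so that $S_k = k q_{k-1}$ and $R_k = (N-k) q_k$.

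Next, set $x(t) := e^{-\alpha(t)}$ (non-constant by \eqref{e10}/\eqref{e10'}) and $y(t) := e^{-\beta(t)}$. The $k=1$ instance of \eqref{e6} gives $\alpha'(t) = q_0 x(t) + (R_0 - R_1)$. Substituting this into the $k=2,3,4$ instances produces three expressions for $\beta'(t)$, each affine in $x(t)$ with coefficient a polynomial in $y(t)$. Equating the $k=2$ expression with those from $k=3$ and $k=4$ yields two identities
\[
x(t)\,P\bigl(y(t)\bigr) = K_P, \qquad x(t)\,R\bigl(y(t)\bigr) = K_R,
\]
where $P(y) = q_2 y^2 - 2 q_1 y + q_0$ is quadratic, $R(y) = q_3 y^3 - 3 q_1 y + 2 q_0$ is cubic, and $K_P, K_R$ are constants in $t$. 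The key step---and the main obstacle---is to show that $y(t)$ must be constant. If $y(t)$ took infinitely many values, then since $x(t)$ is non-constant, dividing the two identities would force $R$ to be a scalar multiple of $P$; matching the leading $y^3$-coefficient would then give $q_3 = 0$, so the chain could never reach sets of size $\ge 4$---impossible once $N \ge 4$, since $\Pr^H$ has full support.

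Finally, write $y(t) \equiv y_0$ and set $z := 1/y_0$. With $\beta' \equiv 0$, matching the $x$-coefficients in the $k=2$ and $k=3$ instances of \eqref{e6} yields $q_1 = q_0 z$ and $q_2 = q_0 z^2$, while matching the constant terms in the $k=2$ equation gives, via $R_k = (N-k)q_k$, the quadratic
\[
(N-2) z^2 - 2(N-1) z + N = 0
\]
with roots $z = 1$ and $z = N/(N-2)$. To rule out the spurious root, I invoke the $k=4$ instance (this is precisely where $N \ge 4$ is used), whose $x$-coefficient forces $q_3 = q_0 z^3$; combining with the constant part of the $k=3$ equation, $(N-3) q_3 = 3(N-1) q_1 - 2N q_0$, gives the cubic
\[
(N-3) z^3 - 3(N-1) z + 2N = 0.
\]
A direct computation shows that substituting $z = N/(N-2)$ leaves a residue of $-4N/(N-2)^3 \ne 0$, so only $z = 1$ is consistent. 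Hence $\beta \equiv 0$, meaning the firms default independently.
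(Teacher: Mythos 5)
Your proposal is correct and follows essentially the same strategy as the paper's proof: reduce \eqref{e6} to a symmetric system indexed by the size $k$ of the subset, show $\beta$ must be constant by forcing two polynomials in $e^{-\beta(t)}$ of different degrees to coincide on an interval (where $N\ge 4$ supplies a large enough $k$), and then derive incompatible algebraic constraints on the rates unless $\beta^\ast=0$. The differences are implementation details: you establish pointwise symmetry $S_B=S_{B'}$, $R_B=R_{B'}$ by comparing the right-hand sides of \eqref{e6} directly (for which you should also note the consistency check $\binom{N}{k}S_k=\binom{N}{k-1}R_{k-1}$, a routine double count, so that $q_{k-1}$ is well defined), whereas the paper simply averages over sets of size $k$; and in the final step you eliminate the spurious root $z=N/(N-2)$ of the quadratic by showing it fails the cubic from $k=3,4$, whereas the paper observes that $\lambda_k$ cannot be simultaneously affine and geometric in $k$.
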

 
\begin{proof}
We observed in the Introduction that the general default model
admits Markovian dynamics when firms default independently.
So we need to establish a converse for the special case of Model I
with $N \ge 4$.

Suppose that a collection of rates exists such that
\eqref{e6} holds for all subsets $B$.  When $\# B \ge 1$,
\eqref{e6} becomes
\begin{equation}
\label{e6_symmetric}
\begin{split}
\# B \alpha'(t) + \binom {\# B}{2} \beta'(t)
& =
\left[\sum_{u \in B} Q(B \setminus \{u\},B ) \right]
\exp\left(-\alpha(t) - (\#B - 1) \beta(t)\right) \\
& \quad  + \, R_\emptyset \, - \, R_B. \\
\end{split}
\end{equation}
If we average \eqref{e6_symmetric} over all $\binom{N}{k}$ choices of sets $B$
with $\#B = k$ for some $k \ge 1$ and set
\[
\lambda_\ell 
= \binom{N}{\ell}^{-1} \sum_{A \subseteq V, \# A = \ell} R_A, 
\quad 0 \le \ell \le N,
\]
we get the equations
\begin{equation}
\label{e15}
\begin{split}
k\alp'(t) + \binom{k}{2} \be'(t) 
& = (\lz-\lk)\\
& \quad + \lkm \frac{k}{N-k+1}e^{-\alp(t)-(k-1)\be(t)},
\quad 1 \le k \le N. \\
\end{split}
\end{equation}
Note that $\lambda_\ell > 0$ for $0 \le \ell \le N-1$ and $\lambda_N  = 0$.

Equation \eqref{e15} for $k=1$ and $k=2$ yields
\begin{equation}
\label{e16}
\left\lbrace
 	\begin{aligned}
		\alp'(t)&=(\lz-\lambda_1)+\frac{\lambda_0}{N}e^{-\alp(t)}\\
		\be'(t) &= (\lz-\lambda_2) + \lambda_1\frac{2}{N-1} e^{-\alp(t)-\be(t)} - 2(\lz-\lambda_1) - 2\frac{\lz}{N} e^{-\alp(t)}.
	\end{aligned}
 \right.
\end{equation}
Substituting the values for $\alp'(t)$ and $\be'(t)$ from \eqref{e16}
into \eqref{e15} gives a system of equations of the form
\begin{equation}\label{e17}
\lkm\frac{k}{N-k+1} e^{-(k-1)\be(t)} -\lambda_1\frac{2}{N-1}e^{-\be(t)}
= a_k e^{\alp(t)} + b_k, \quad 1 \le k \le N,
\end{equation}
for appropriate constants $a_k$ and $b_k$, $1 \le k \le N$,
that depend on the constants $\lambda_\ell$, $0 \le \ell \le N$.

We claim that the continuous function $\be$ is constant. 
Suppose that this is not so.
Note that $a_k$ can be non-zero for at most one value of $k \in \{1, \ldots, N\}$, because if $a_{k'}\neq 0$ and $a_{k''}\neq 0$ for $1 \le k' <  k'' \le N$, then
\begin{equation*}
\begin{split}
& \frac{\lambda_{k'-1}\frac{k'}{N-k'+1} e^{-(k'-1) \be(t)}-\lambda_1\frac{2}{N-1}x-b_{k'}}{a_k'} \\
& \quad = 
\frac{\lambda_{k''-1}\frac{k''}{N-k''+1} e^{-(k''-1) \be(t)}-\lambda_1\frac{2}{N-1}x-b_{k''}}{a_k''}, \\
\end{split}
\end{equation*}
and letting $t$ vary over an open interval $J$ such that the image $\{\be(t) : t \in J\}$
contains an open interval we would conclude that two polynomials of different
degrees coincided over an open interval.  Because $N \ge 4$, we thus must have $a_k=0$ for some
$k \ge 3$. Observe for such a $k$ that
\begin{equation*}
\lkm\frac{k}{N-k+1} e^{-(k-1)\be(t)} -\lambda_1\frac{2}{N-1}e^{-\be(t)}= b_k,
\end{equation*}
and again we would conclude that two polynomials of different
degrees coincided over an open interval.  Therefore, the function $\be$ must be a constant, say $\beta^*$. 
Of course, $\beta^*$ is the pre-specified value for $\beta(T)$.

We now show that $\beta^*=0$.
Equation \eqref{e15} now becomes
\begin{equation*}
k\alp'(t)  = (\lz-\lk)+\lkm \frac{k}{N-k+1}e^{-\alp(t)-(k-1)\beta^*}
\end{equation*}
and hence, by \eqref{e16},
\begin{equation}
\label{E:affine}
k \left[(\lz-\lambda_1)+\frac{\lambda_0}{N}e^{-\alp(t)} \right] 
= (\lz-\lk)+\lkm \frac{k}{N-k+1}e^{-\alp(t)-(k-1)\beta^*}.
\end{equation}
Each side of \eqref{E:affine} is first degree polynomial in $e^{-\alpha(t)}$
for every $k \in \{1, \ldots, N\}$.
It is apparent from the differential equation in \eqref{e16} that the function $\alpha$ is not constant 
(indeed, we solved this equation explicitly in \eqref{e10}).  Consequently, the coefficients
of these two polynomial coincide and hence
\begin{equation}
\label{E:coeff_conclusion}
\left\lbrace
 	\begin{aligned}
		k(\lambda_0-\lambda_1)&=(\lz-\lambda_k)\\
		k\frac{\lz}{N} &= \lambda_{k-1}\frac{k}{N-k+1}e^{-(k-1)\beta^*}
	\end{aligned}
 \right.
\end{equation}
for $1 \le k \le N$.
Re-arranging \eqref{E:coeff_conclusion}, we conclude that
\begin{equation*}
		\lambda_k =k(\lambda_1-\lz)+\lz
\end{equation*}
for $1 \le k \le N$ and
\begin{equation*}
\lambda_k = \frac{(N-k)}{N}\lz e^{k \beta^*}
\end{equation*}
for $0 \le k \le N-1$. Because $N \ge 4$, this
is impossible unless $\beta^* = 0$.
\end{proof}

\section{Model II: Two classes with common individual propensity to default}
\label{S:Model_II}

%The next interesting case is the one where we have two types of firms I and II. Say we have $M$ firms of type I and $N$ firms of type II. We suppose there will be correlation only between firms of different type. Note that we may look at the Markov process as taking values in $\{0,\dots,M\}\times \{0,\dots,N\}$. The process will start at $(0,0)$ and make one-step jumps either in the first coordinate (i.e. a firm of type I defaults) or in the second coordinate (a firm of type II defaults).

Recall Model II from the Introduction.  The graph $G$ is $K_{M,N}$, the complete bipartite 
graph with vertex set the disjoint union $V = \hat V \sqcup \check V$, where $\hat V$ has $M$ vertices, $\check V$ has $N$
vertices, and there are functions $\alpha$ and $\beta$ such that
\begin{equation} \label{e12_II}
\left\lbrace
 	\begin{aligned}
		\alpha_u (t) &= \alpha (t) ~\text{for all}~ u \in V\\
		\beta_{uv} (t) &= \beta (t)~\text{for all}~ u \in \hat V,  \, v \in \check V.
	\end{aligned}
 \right.
 \end{equation}

\begin{proposition}
\label{p_aa}
Model II with $M \ge 3$ or $N \ge 3$ admits time-homogeneous
Markovian dynamics if and only if the firms default independently.
\end{proposition}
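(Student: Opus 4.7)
The plan is to mimic the four-step strategy that proved Proposition~\ref{P:Model_I}: symmetrise \eqref{e6} using the natural group action, extract explicit ODEs for $\alpha$ and $\beta$ from the small-index equations, reduce the remaining equations to a family of algebraic identities in $x(t) := e^{\alpha(t)}$ and $y(t) := e^{-\beta(t)}$, and then force $\beta$ to be constant by a polynomial-degree argument before using coefficient matching to conclude $\beta^* = 0$.

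First, I would exploit the $S_M \times S_N$ symmetry on $V = \hat V \sqcup \check V$, so that every subset $B \subseteq V$ is indexed by $(j,k) := (\#\hat B, \#\check B)$. Because every edge of $K_{M,N}$ joins $\hat V$ to $\check V$, the Hamiltonian of such a $B$ equals $(j+k)\alpha(t) + jk\beta(t)$, and the sum $\sum_{v \in B \setminus \{u\}} \beta_{uv}(t)$ appearing in \eqref{e6} is $k\beta(t)$ when $u \in \hat B$ and $j\beta(t)$ when $u \in \check B$. Averaging \eqref{e6} over subsets of fixed type $(j,k)$, writing $\lu_{j,k},\lr_{j,k}$ for the rates $Q(B\setminus\{u\},B)$ averaged over $u \in \hat B$ and $u \in \check B$ respectively, and $\lambda_{j,k}$ for the averaged $R_B$, one obtains, for every $j+k \ge 1$,
\begin{equation*}
(j+k)\alpha'(t) + jk\beta'(t) = j\lu_{j,k}\,e^{-\alpha(t)-k\beta(t)} + k\lr_{j,k}\,e^{-\alpha(t)-j\beta(t)} + \lambda_{0,0}-\lambda_{j,k}.
\end{equation*}
The cases $(j,k) \in \{(1,0),(0,1)\}$ both reduce to the scalar ODE $\alpha'(t) = \mu\,e^{-\alpha(t)} + \lambda_{0,0} - \lambda_{1}$ (consistency forcing $\lu_{1,0} = \lr_{0,1} =: \mu$ and $\lambda_{1,0} = \lambda_{0,1} =: \lambda_1$), and by \eqref{e10}--\eqref{e10'} this determines $\alpha$ explicitly and makes $\alpha$ strictly monotone on $(0,T]$. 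The $(1,1)$ equation then expresses $\beta'(t)$ as an affine combination of $e^{-\alpha(t)}$, $e^{-\alpha(t)-\beta(t)}$ and a constant. Substituting these expressions for $\alpha'$ and $\beta'$ into the $(j,k)$ equation with $j,k\ge 1$ and multiplying through by $e^{\alpha(t)}$ produces an identity of the form
\begin{equation*}
C_{j,k}\,x(t) + E_{j,k} + F_{j,k}\,y(t) = j\lu_{j,k}\,y(t)^{k} + k\lr_{j,k}\,y(t)^{j},
\end{equation*}
where $C_{j,k},E_{j,k},F_{j,k}$ are explicit constants in the symmetrised rates.

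Assuming for contradiction that $\beta$ is non-constant, $y(t)$ sweeps an open interval and the above identity must hold as a polynomial identity in $y$ along the curve $(x(t),y(t))$. Applying the Model~I polynomial-degree argument to the one-parameter family of $(1,k)$ identities for $k = 2,\dots,N$, I would argue that if two indices $2 \le k' < k''$ both had $C_{1,k'} \ne 0$ and $C_{1,k''} \ne 0$, then eliminating $x$ between the two identities produces two polynomials in $y$ of degrees $k'$ and $k''$ that agree on an interval, forcing $\lu_{1,k'} = \lu_{1,k''} = 0$ by comparing top coefficients. Short bookkeeping with the resulting affine relations, together with the parallel $(k,1)$ family (which also produces a $y^{k}$ term on its right-hand side via $\lr_{k,1}$), eliminates all remaining configurations compatible with $y$ non-constant, provided $N \ge 3$ (or $M \ge 3$ by the symmetric argument); hence $\beta(t) \equiv \beta^*$.

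Finally, with $\beta$ equal to a constant $\beta^*$, the identity reduces to $C_{j,k}\,x(t) + \text{const} = 0$, and since $x(t)$ is non-constant I would match the coefficient of $x(t)$ and the constant term separately in each $(j,k)$ equation. This produces two independent formulas for $\lambda_{j,k}$, one essentially affine in $(j,k)$ and one containing factors $e^{(k-1)\beta^*}$ or $e^{(j-1)\beta^*}$ coming from $\lu_{j,k}$ and $\lr_{j,k}$. Evaluating both at two suitable pairs, for instance $(1,2)$ and $(1,3)$ when $N \ge 3$, forces $e^{\beta^*} = 1$ and hence $\beta^* = 0$. I expect the main obstacle to be the polynomial-degree step: the bipartite structure provides two families of rates $\lu_{j,k}, \lr_{j,k}$ and mixed monomials $y^{j}, y^{k}$ on the right-hand side, so several potentially compensating configurations must be ruled out, and the hypothesis $M \ge 3$ or $N \ge 3$ is precisely what provides enough indices on at least one side for the Model~I-style argument to close.
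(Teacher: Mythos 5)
Your plan follows the paper's structure closely, and the first half of it is sound: symmetrising \eqref{e6} by the $S_M\times S_N$ action does give the two-index analogue of \eqref{e15} (the paper's \eqref{ek_bi}), the $(1,0)$, $(0,1)$, and $(1,1)$ cases do deliver $\alpha'$ and $\beta'$, and your polynomial-degree argument on the $(1,k)$ family (or $(k,1)$ if $M\ge 3$) is a correct way to force $\beta$ constant; in fact it is cleaner than you suggest --- if both $C_{1,2}$ and $C_{1,3}$ are nonzero you get a quadratic and a cubic in $y$ agreeing on an interval, and if one is zero you get a nontrivial polynomial of degree $\ge 2$ vanishing identically, so in either case the strictly positive leading coefficient (which is a genuine jump rate, not allowed to vanish) gives the contradiction directly, no extra bookkeeping needed. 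The paper does exactly this with the pairs $(2,1)$ and $(3,1)$ under the normalisation $M\ge 3$.

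The final step, though, contains a real gap. After setting $\beta\equiv\beta^*$ and matching coefficients of $e^{-\alpha(t)}$ and the constant in each $(j,k)$ equation, the constant-term identity pins $\lambda_{j,k}$ affinely, but the $e^{-\alpha}$-coefficient identity constrains the \emph{transition} rates $\lu_{j,k}$ and $\lr_{j,k}$ (rates from $(j-1,k)$- and $(j,k-1)$-sets), not $\lambda_{j,k}$ itself, so you do not in fact have ``two independent formulas for $\lambda_{j,k}$." Evaluating at $(1,2)$ and $(1,3)$ alone will not close the argument: those equations introduce the undetermined rates $\lr_{1,2}$ and $\lr_{1,3}$, for which you have no closed form from the $(1,k)$- and $(0,k)$-families alone. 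The missing ingredient is the \emph{global boundary constraint} $\lambda_{M,N}=0$ (no jump leaves the full set), which fixes $\lambda_0-\lambda_1 = \lambda_0/(M+N)$, and then the $(j,0)$- and $(0,k)$-chains give $\lu_{j,0}=\lr_{0,k}=\lr_{j,1}=\lu_{1,k}=\mu$ for all relevant indices; only after that does the $(1,1)$ coefficient identity read $2\mu = (\lu_{1,1}+\lr_{1,1})e^{-\beta^*} = 2\mu e^{-\beta^*}$, i.e.\ $\beta^*=0$. The paper packages the same chain slightly differently, comparing \eqref{C_1} (from the $(1,1)$ instance of \eqref{e_1.5r}) with \eqref{C_2} (from its $(2,0)$ instance), but in both presentations the pairs $(1,1)$ and $(2,0)$ together with the boundary recursion are what close the argument, not $(1,2)$ and $(1,3)$.
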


\begin{proof}  
As in the proof of Proposition~\ref{P:Model_I}, it suffices from the remarks
made in the Introduction about the general model
to show that if the model admits time-homogeneous
Markovian dynamics, then the firms default independently.

%The Kolmogorov equations will read
%\begin{equation*}
%\frac{d}{dt}P_t(m,n)=P_t(m-1,n)\lr_{m-1,n}+P_t(m,n-1)\lu _{m,n-1}-(\lu_{m,n}+\lr_{m,n})P_t(m,n)
%\end{equation*}
%where $\lr_{m,n}$ is interpreted as the jump matrix element of going `right', that is $(m,n)\rightarrow (m+1,n)$, and similarly $\lu_{m,n}$ is the jump matrix element of going `up' $(m,n)\rightarrow (m,n+1).$
%Using the above assumptions we see that the hamiltonian of our system is
%\begin{equation*}
%H_t(m,n)=(m+n)\alp(t)+mn\be(t)
%\end{equation*}
%and thus
%\begin{equation*}
%P_t(m,n)=\binom{M}{m} \binom{N}{n}\frac{e^{(m+n)\alp(t)+mn\be(t)}}{e^{rt}}
%\end{equation*}

Symmetry considerations similar to those in the proof of Proposition~\ref{P:Model_I}
show that if \eqref{e6} holds for some choice of jump rates, then
there are constants $\lr_{m,n}$ and $\lu_{m,n}$, $0 \le m \le M$ and $0 \le n \le N$,
with $\lr_{M,n}=0$ for $0 \le n \le N$, $\lu_{m,N}=0$ for $0 \le m \le M$, and
$\lr_{m,n}$ and $\lu_{m,n}$ strictly positive otherwise such that
\begin{equation}
\label{ek_bi}
\begin{split}
(m+n)\alp'(t)+mn\be'(t)
& =r-\lr_{m,n}-\lu_{m,n} \\
& \quad + \frac{m}{M-m+1}\lr_{m-1,n}e^{-\alp(t)-n\be(t)} \\
& \quad + \frac{n}{N-n+1}\lu_{m,n-1}e^{-\alp(t)-m\be(t)},
\end{split}
\end{equation}
where we set $r:=\lu_{0,0}+\lr_{0,0}$
and adopt the convention that $\lr_{-1,n}=0$, $0 \le n \le N$,
and $\lu_{m,-1}=0$, $0 \le m \le M$.  We leave the straightforward details to the reader.

Setting $(m,n) = (1,0)$  in \eqref{ek_bi} gives
\begin{equation}
\label{E:alpha_first}
\alp'(t)=r-(\lr_{1,0}+\lu_{1,0})+\frac{\lr_{0,0}}{M}e^{-\alp(t)}.
\end{equation}
Similarly, setting $(m,n) = (0,1)$ in \eqref{ek_bi} gives
\begin{equation}
\label{E:alpha_second}
\alp'(t)=r-(\lr_{0,1}+\lu_{0,1})+\frac{\lu_{0,0}}{N}e^{-\alp(t)}.
\end{equation}
In particular, we have the identity
\begin{equation*}
\frac{\lr_{0,0}}{M}=\frac{\lu_{0,0}}{N}.
\end{equation*}

Setting $(m,n) = (1,1)$ in \eqref{ek_bi} and substituting in the expression for
$\alpha'(t)$ from \eqref{E:alpha_first} gives
\begin{equation}
\label{E:beta_model_II}
\begin{split}
\be' (t) &= r-(\lu_{1,1}+\lr_{1,1})+\left(\frac{\lr_{0,1}}{M}+\frac{\lu_{1,0}}{N}\right)e^{-\alp(t)-\be(t)}-2\alp'(t)\\
&  =r-(\lu_{1,1}+\lr_{1,1})+\left(\frac{\lr_{0,1}}{M}+\frac{\lu_{1,0}}{N}\right)e^{-\alp(t)-\be(t)}\\
& \quad - 2\left(r-(\lr_{1,0}+\lu_{1,0})+\frac{\lr_{0,0}}{M}e^{-\alp(t)}\right).\\
\end{split}
\end{equation}
Further substituting the above expressions for $\alp'(t)$ and $\be'(t)$ 
from \eqref{E:alpha_first} and \eqref{E:beta_model_II} into \eqref{ek_bi} 
for a general pair $(m,n)$ leads to a system of equations of the form
\begin{equation}
\label{cont_1}
\begin{split}
a_{m,n}e^{\alp(t)}+b_{m,n}& =c_{m,n}e^{-n\be(t)}+d_{m,n}e^{-m\be(t)} - e_{m,n}e^{-\be(t)}, \\
& \quad 0 \le m \le M, \, 0 \le n \le N, \\
\end{split}
\end{equation}
where the various coefficients are given by
\begin{eqnarray*}
\left\lbrace
 	\begin{aligned}
		a_{m,n}&=(m+n-2mn)(r-\lr_{1,0}-\lu_{1,0})+mn(r-\lu_{1,1}-\lr_{1,1}) \\
		       & \quad -r + \lu_{m,n}+\lr_{m,n}\\
		 b_{m,n}&=(m+n-2mn)\frac{\lr_{0,0}}{M}\\
        c_{m,n}&=\frac{m\lr_{m-1,n}}{M-m+1} \\
        d_{m,n}&=\frac{n\lu_{m,n-1}}{N-n+1} \\
        e_{m,n}&=mn\left(\frac{\lr_{0,1}}{M}+\frac{\lu_{1,0}}{N}\right).
	\end{aligned}
 \right.
\end{eqnarray*}
Note that $c_{m,n}> 0$ whenever $m>0$ and $d_{m,n}> 0$ whenever $n>0$.

We claim that the continuous function $\beta$ is constant. 
Assume without loss of generality that $M\geq 3$ and suppose that the function $\beta$ is not constant.

Consider the two cases:
\begin{itemize}
\item [(i)] One of $a_{2,1}$ or $a_{3,1}$ is zero.
\item [(ii)] Both $a_{2,1}$ and $a_{3,1}$ are non-zero.
\end{itemize}

Case (i) is impossible, because for either $m=2$ or $m=3$ we would have
$0 = d_{m,1} e^{-m \beta(t)} + (c_{m,1} - e_{m,1}) e^{-\beta(t)} - b_{m,1}$
and conclude that either a quadratic or cubic
polynomial was identically zero on some open interval.

Turning to Case (ii), note first that if $a_{m,n}\neq 0$ we have
\begin{equation}
\label{e_amn}
e^{\alp(t)}
=\frac{c_{m,n}e^{-n\be(t)}+d_{m,n}e^{-m\be(t)}-e_{m,n}e^{-\be(t)}-b_{m,n}}{a_{m,n}}
\end{equation}
and so we would have
a quadratic and a cubic polynomial 
%(namely what you have
%in \eqref{e_amn} for $(m,n) = (2,1)$ and
%$(m,n) = (3,1))$ 
that agreed on an  open interval.

Therefore, the function $\beta$ must be a constant, say $\beta^*$.  
Of course, $\beta^*$ is the pre-specified value for $\beta(T)$.
We now show that $\beta^* = 0$.

Equation \eqref{ek_bi} becomes
\begin{equation}
\label{ek_bi_constant}
\begin{split}
& (m+n)\left(r-(\lr_{1,0}+\lu_{1,0})+\frac{\lr_{0,0}}{M}e^{-\alp(t)}\right) \\
& =r-(\lr_{m,n}+\lu_{m,n})
+ \frac{m\lr_{m-1,n}}{M-m+1}e^{-\alp(t)-n\beta^*}+\frac{n\lu_{m,n-1}}{N-n+1}e^{-\alp(t)-m\beta^*}.\\
\end{split}
\end{equation}
For a fixed pair $(m, n)$, each side of \eqref{ek_bi_constant} is a first degree polynomial
in $\alp(t)$ and, since $\alp(t)$ is continuous and non-constant, we can equate
coefficients. If we also  record our boundary conditions and conventions from above
we arrive at the following system of equations for $0\leq m\leq M$ and $0\leq n \leq N$
\begin{eqnarray}
\label{R1}
\left\lbrace
 	\begin{aligned}
\frac{(m+n)}{M}\lr_{0,0}&=\frac{m}{M-m+1}\lr_{m-1,n}e^{-n\beta^*} \\
& \quad +\frac{n}{N-n+1}\lu_{m,n-1}e^{-m\beta^*}\\
		 (m+n)(r- \lr_{1,0}-\lu_{1,0})&= r- \lr_{m,n}-\lu_{m,n}\\
        \lu_{m,N}&=0 \\
        \lr_{M,n} &=0 \\
        \lu_{m,-1}&=0 \\
        \lr_{-1,n}&=0.
	\end{aligned}
 \right.
\end{eqnarray}

Because $\lr_{M,N}=\lu_{M,N}=0$, we see from the second equation of \eqref{R1} for
$(m,n) = (M,N)$ that
\begin{equation}
\label{E:1001}
r - \lr_{1,0} - \lu_{1,0} = \frac{r}{M+N},
\end{equation}
and we can substitute this value 
into the second equation of \eqref{R1} for general $(m,n)$ to conclude that
\[
\lr_{m,n}+\lu_{m,n}
 = 
r\left(1 - \frac{m+n}{M+N}\right), 
\quad 0\leq m\leq M, \, 0\leq n \leq N.
\]

Setting $m=0$ in the first equation
of \eqref{R1} and noting the fifth equation, we get
\begin{equation*}
\lu_{0,n-1} = \frac{N-n+1}{M} \lr_{0,0}, \quad 1\leq n\leq N.
\end{equation*}
A similar argument leads to
\begin{equation*}
\lr_{m-1,0} = \frac{M-m+1}{M} \lr_{0,0}, \quad 1\leq m\leq M.
\end{equation*}

Combining these observations, we arrive at the following system of equations
\begin{equation}\label{e_1.5r}
\left\lbrace
 	\begin{aligned}
        & \frac{n\lu_{m,n-1}}{N-n+1}e^{(n-m)\beta^*} - \frac{m\lu_{m-1,n}}{M-m+1} \\ 
        & \quad = \frac{\lr_{0,0}}{M}
        \Bigl[(m+n)e^{n\beta^*} 
            - \frac{m(M+N-m-n+1)}{M-m+1}\Bigr], \\
        & \qquad 1\leq m\leq M, \, 0\leq n \leq N,\\
		& \lr_{m,n}+\lu_{m,n} = r\left(1-\frac{m+n}{M+N}\right), \quad 0\leq m\leq M, \, 0\leq n \leq N,\\
    & \lu_{m,-1}=0, \quad 0\leq m \leq M,\\
    & \lr_{-1,n}=0, \quad 0\leq n \leq N,\\
    & \lr_{M,n}=0, \quad 0 \le n \le N,\\
    & \lu_{m,N}=0, \quad 0 \le m \le M,\\ 
    & \lu_{0,n} = \frac{N-n}{M} \lr_{0,0}, \quad 0\leq n\leq N, \\
    & \lr_{m,0} = \frac{M-m}{M} \lr_{0,0}, \quad 0\leq m\leq M.
 \end{aligned}
 \right.
\end{equation}

Note that if $\lr_{m,n}$ and $\lu_{m,n}$ satisfy \eqref{R1}, then they also satisfy \eqref{e_1.5r}.
It will thus suffice to show that if  $\beta^*\neq 0$, then there do not exist
$\lr_{m,n}$ and $\lu_{m,n}$ satisfying \eqref{e_1.5r}.

Setting $(m,n) = (1,1)$ in the first equation of \eqref{e_1.5r} gives
\begin{equation*}
\frac{\lu_{1,0}}{N} - \frac{\lu_{0,1}}{M}
=\frac{\lr_{0,0}}{M}\left(2e^{\beta^*}-\frac{M+N-1}{M}\right),
\end{equation*}
while the fifth equation of \eqref{e_1.5r} forces $\lu_{0,1}=\frac{\lr_{0,0}}{M}(N-1)$.
Thus,
\begin{equation}\label{C_1}
\lu_{1,0}=\frac{N}{M}\lr_{0,0}(2e^{\beta^*}-1).
\end{equation}

If instead we set $(m,n) = (2,0)$ in the first equation from \eqref{e_1.5r},  we obtain
\begin{equation}
\label{C_2}
\lu_{1,0}=\frac{N}{M}\lr_{0,0}.
\end{equation}

Comparing \eqref{C_1} and \eqref{C_2}, we conclude that
\begin{equation*}
2e^{\beta^*}-1=1,
\end{equation*}
and hence $\beta^*=0$.
\end{proof}

\section{Two classes with different individual propensity to default}
\label{S:Model_III}

Recall Model III from the Introduction.  As with Model II, the
graph $G$ is $K_{M,N}$, the complete bipartite 
graph with vertex set the disjoint union $V = \hat V \sqcup \check V$, where $\hat V$ has $M$ vertices and 
$\check V$ has $N$
vertices.  Now, however, there are functions $\hat \alpha$, $\check \alpha$ and $\beta$ such that
\begin{equation} \label{e12_III}
\left\lbrace
 	\begin{aligned}
		\alpha_u (t) &= \hat \alpha (t) ~\text{for all}~ u \in \hat V\\
		\alpha_v (t) &= \check \alpha (t) ~\text{for all}~ v \in \check V\\
		\beta_{uv} (t) &= \beta (t)~\text{for all}~ u \in \hat V,  \, v \in \check V.
	\end{aligned}
 \right.
 \end{equation}

\begin{proposition}
\label{p_a1a2}
Consider Model III with $M \geq 4, N\geq 3$ or $M\geq 3, N \geq 4$.
Suppose that the prescribed values of $\hat \alpha(T)$
and $\check \alpha(T)$ are distinct.  If the prescribed
value of $\beta(T)$ is non-zero and sufficiently small, then
the model does not admit time-homogeneous Markovian dynamics.
\end{proposition}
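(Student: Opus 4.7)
The plan is to follow the strategy of Propositions~\ref{P:Model_I} and \ref{p_aa}, but carrying two distinct boundary values $\hat\alpha(T)$ and $\check\alpha(T)$ through the computation. By the $S_M \times S_N$ symmetry of Model~III, I would first reduce to rates that depend only on the ordered pair $(m,n) = (|B\cap\hat V|, |B\cap\check V|)$, introducing constants $\lr_{m,n}$ and $\lu_{m,n}$ exactly as in the proof of Proposition~\ref{p_aa}. The analogue of \eqref{ek_bi} now reads
\begin{equation*}
\begin{split}
m\hat\alpha'(t)+n\check\alpha'(t)+mn\beta'(t)
&=r-\lr_{m,n}-\lu_{m,n} \\
&\quad + \tfrac{m}{M-m+1}\lr_{m-1,n}e^{-\hat\alpha(t)-n\beta(t)} \\
&\quad + \tfrac{n}{N-n+1}\lu_{m,n-1}e^{-\check\alpha(t)-m\beta(t)},
\end{split}
\end{equation*}
and specializing to $(m,n) = (1,0)$, $(0,1)$, $(1,1)$ gives closed-form expressions for $\hat\alpha'$, $\check\alpha'$ and $\beta'$ in terms of the unknown functions and the rate constants. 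Substituting these back into the general equation yields, for each $(m,n)$, a Laurent-polynomial identity in $e^{-\hat\alpha(t)}$, $e^{-\check\alpha(t)}$ and $e^{-\beta(t)}$ that must hold on $[0,T]$.

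Next I would argue that $\beta$ is constant. Supposing otherwise, $\beta(t)$ would range over an open interval, and comparing the identities for two different choices of $(m,n)$ with sufficiently large $m$ or $n$ would force a polynomial in $e^{-\beta(t)}$ to vanish on an open set, contradicting the degree counts. The hypothesis that $M\geq 4,N\geq 3$ or $M\geq 3,N\geq 4$ guarantees enough admissible pairs $(m,n)$ of the right shape to perform this degree comparison, paralleling the argument in the proof of Proposition~\ref{p_aa}. Let $\beta^*$ denote the resulting constant value, necessarily equal to the prescribed $\beta(T)$.

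With $\beta\equiv\beta^*$, each remaining identity is a polynomial relation in $e^{-\hat\alpha(t)}$ and $e^{-\check\alpha(t)}$. Because $\hat\alpha(T)\neq\check\alpha(T)$, the explicit formula \eqref{e10} shows that the curve $t\mapsto(e^{-\hat\alpha(t)},e^{-\check\alpha(t)})$ is not contained in any line through the origin, so we can equate coefficients in the two variables independently. This yields an algebraic system for the rates $\lr_{m,n}$, $\lu_{m,n}$ and $\beta^*$ analogous to \eqref{e_1.5r} but now with two separate boundary strands along $\hat V$ and $\check V$.

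The main obstacle is the final step: extracting from this system a single scalar consistency equation in $e^{\beta^*}$ whose only small real solution is zero. I would compute the analogues of \eqref{C_1} and \eqref{C_2} by playing off the recursions at $(m,n) = (1,1)$, $(2,0)$ and $(0,2)$, and I expect the resulting relation to have the form $F(\beta^*;\hat\alpha(T),\check\alpha(T)) = 0$ where $F$ is real-analytic, $F(0;\cdot,\cdot) = 0$, and $\partial_{\beta^*}F(0;\hat\alpha(T),\check\alpha(T))\neq 0$ precisely under the assumption $\hat\alpha(T)\neq\check\alpha(T)$. An isolated-zero argument (or the implicit function theorem) then rules out any sufficiently small nonzero $\beta^*$, which is exactly the ``sufficiently small'' restriction appearing in the proposition. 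Verifying the non-vanishing of the derivative at $\beta^*=0$, and understanding why it vanishes in the symmetric case $\hat\alpha(T)=\check\alpha(T)$ that reduces to Model~II, is the crux of the computation.
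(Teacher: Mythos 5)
Your reduction by $S_M\times S_N$ symmetry, the specialization at $(m,n)=(1,0),(0,1),(1,1)$, and the degree-count argument showing that $\beta$ must be constant all follow the paper's proof closely and are sound in outline. (One caution: to make the degree count work you must eliminate \emph{both} $e^{\hat\alpha(t)}$ and $e^{-\check\alpha(t)}$, which requires cross-multiplying twice, once with a second pair $(i,j)$ and once with a second $4$-tuple; the paper uses the explicit $4$-tuples $(2,2,2,3)$ and $(4,2,4,3)$, which is where $M\geq 4,\ N\geq 3$ enters. Also, the correct independence hypothesis in the final stage is that the \emph{three} functions $1$, $e^{-\hat\alpha}$, $e^{-\check\alpha}$ are linearly independent as functions of $t$, not merely that the planar curve $t\mapsto(e^{-\hat\alpha(t)},e^{-\check\alpha(t)})$ avoids lines through the origin; the latter is strictly weaker.)

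The genuine gap is the final step, which you flag yourself as ``the crux'' and leave as a speculative IFT plan. What you describe --- finding a single scalar identity $F(\beta^*)=0$ from $(1,1)$, $(2,0)$, $(0,2)$ and showing $\partial_{\beta^*}F(0)\neq 0$ --- is not what happens, and there is no reason it should reduce to a single equation. In Model III, once $\beta\equiv\beta^*$, each instance of the master equation becomes $a+be^{-\hat\alpha(t)}+ce^{-\check\alpha(t)}=0$; linear independence of $1$, $e^{-\hat\alpha}$, $e^{-\check\alpha}$ (which holds at $\beta^*=0$ by the explicit formula $e^{\hat\alpha(t)}=(1+e^{\hat\alpha(T)})^{t/T}-1$ and persists for small $\beta^*$ by a compactness argument --- this is precisely where ``$\beta(T)$ sufficiently small'' and $\hat\alpha(T)\neq\check\alpha(T)$ are used) forces all three coefficients to vanish \emph{separately}, giving three families of algebraic relations indexed by $(m,n)$ rather than one. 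Combining them along the diagonal $(m,n)=(k,k)$ yields $Ake^{k\beta^*}+Be^{k\beta^*}+Ck+D=0$ for $0\le k\le M\wedge N$ with $A,B>0$, and the conclusion $\beta^*=0$ then follows from an elementary intersection count: a line meets the graph of $t\mapsto Ate^{\beta^*t}+Be^{\beta^*t}$ at most twice if $\beta^*>0$ and at most three times if $\beta^*<0$, while $M\wedge N\geq 3$ supplies four or more intersection points. This is a concrete counting argument, not an implicit-function-theorem argument, and until you actually derive the diagonal relation and carry out the count (or complete the IFT route, including the derivative computation you left undone), the proof is incomplete.
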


\begin{proof}
Another symmetry argument similar to those in the proofs of Proposition~\ref{P:Model_I}
and Proposition~\ref{p_aa} shows that if \eqref{e6} holds for some choice of jump rates, then
there are constants $\lr_{m,n}$ and $\lu_{m,n}$, $0 \le m \le M$ and $0 \le n \le N$,
with $\lr_{M,n}=0$ for $0 \le n \le N$, $\lu_{m,N}=0$ for $0 \le m \le M$, and
$\lr_{m,n}$ and $\lu_{m,n}$ strictly positive otherwise such that
\begin{eqnarray}\label{ek_bi_2}
m\hat \alp'(t)&+&n\check \alp'(t)+mn\be'(t)=r -(\lr_{m,n}+\lu_{m,n}) +\nonumber\\
&+&\frac{m\lr_{m-1,n}}{M-m+1}e^{-\hat \alp(t)-n\be(t)}+\frac{n\lu_{m,n-1}}{N-n+1}e^{-\check \alp(t)-m\be(t)},
\end{eqnarray}
where we set $r:=\lu_{0,0}+\lr_{0,0}$
and adopt the convention that $\lr_{-1,n}=0$, $0 \le n \le N$,
and $\lu_{m,-1}=0$, $0 \le m \le M$.

Applying \eqref{ek_bi_2} with $(m,n)=(1,0)$ and $(m,n)=(0,1)$ gives
\begin{equation}\label{e_alp}
\left\lbrace
 	\begin{aligned}
		\hat \alp'(t)&=r-(\lu_{1,0}+\lr_{1,0})+\frac{\lr_{0,0}}{M}e^{-\hat \alp(t)}\\
		 \check \alp'(t)&=r-(\lu_{0,1}+\lr_{0,1})+\frac{\lu_{0,0}}{N}e^{-\check \alp(t)}.
	\end{aligned}
 \right.
\end{equation}
Similarly, applying  \eqref{ek_bi_2} with $(m,n) = (1,1)$ and then substituting
in the expressions for $\hat \alp(t)$ and $\check \alp(t)$ from \eqref{e_alp} gives
\begin{equation}
\label{e_beta}
\begin{split}
\be' (t) &= r-(\lu_{1,1}+\lr_{1,1})+\frac{\lr_{0,1}}{M}e^{-\hat \alp(t)-\be(t)}+\frac{\lu_{1,0}}{N}e^{-\check \alp(t)-\be(t)}-\hat \alp'(t)-\check \alp'(t)\\
&= -r-(\lu_{1,1}+\lr_{1,1})-(\lu_{0,1}+\lr_{0,1})-(\lu_{1,0}+\lr_{1,0})+\frac{\lr_{0,1}}{M}e^{-\hat \alp(t)-\be(t)}\\
& \quad +\frac{\lu_{1,0}}{N}e^{-\check \alp(t)-\be(t)}-\frac{\lr_{0,0}}{M}e^{-\hat \alp(t)} -\frac{\lu_{0,0}}{N}e^{-\check \alp(t)}. \\
\end{split}
\end{equation}

Substituting the expressions for the $\hat \alp$, $\check \alp$ and $\be$ from
\eqref{e_alp} and \eqref{e_beta} into \eqref{ek_bi_2} for general $(m,n)$
produces a system of equations of the form
\begin{equation}
\label{e_a1a2}
\begin{split}
& a_{m,n}e^{\hat \alp(t)}
+ b_{m,n} 
+ c_{m,n}e^{\hat \alp(t)-\check \alp(t)}
+ d_{m,n}e^{-\be(t)}
+ e_{m,n}e^{\hat \alp(t)-\check \alp(t)-\be(t)} \\
& \quad = f_{m,n}e^{-n\be(t)} +g_{m,n}e^{\hat \alp(t)-\check \alp(t)-m\be(t)}, \\
\end{split}
\end{equation}
where
\begin{equation*}
\left\lbrace
 	\begin{aligned}
		a_{m,n}&=(m-mn)(r-\lu_{1,0}-\lr_{1,0})+(n-mn)(r-\lu_{0,1}-\lr_{0,1}) \\
		       & \quad +mn(r-\lu_{1,1}-\lr_{1,1}) -r +\lr_{m,n}+\lu_{m,n}\\
		b_{m,n}&=(m-mn)\frac{\lr_{0,0}}{M}\\
        c_{m,n}&=(n-mn)\frac{\lu_{0,0}}{N}\\
        d_{m,n}&=mn\frac{\lr_{0,1}}{M}\\
        e_{m,n}&=mn\frac{\lu_{1,0}}{N}\\
        f_{m,n}&=\frac{m\lr_{m-1,n}}{M-m+1}\\
        g_{m,n}&=\frac{n\lu_{m,n-1}}{N-n+1}.
	\end{aligned}
 \right.
\end{equation*}
Observe that because 
$\lr_{m,n}$  is strictly positive for $1 \leq m\leq M-1$ and $1\leq n \leq N$
and 
$\lu_{m,n}$  is strictly positive for $1 \leq m\leq M$ and $1\leq n \leq N-1$,
both
$f_{m,n}$ and $g_{m,n}$ are strictly positive for $1 \leq m\leq M$ and $1\leq n \leq N$.

We claim that the continuous function $\beta$ is constant.
Assume without loss of generality that $M \ge 4$, $N \ge 3$ and suppose that the function
$\be$ is not constant. 

Re-arrange \eqref{e_a1a2} to get
\begin{equation}
\label{e_x}
\begin{split}
& \left[
a_{m,n}+c_{m,n}e^{-\check \alp(t)}+e_{m,n}e^{-\check \alp(t)}e^{-\be(t)}-g_{m,n}e^{-\check \alp(t)} e^{-m\be(t)}
\right]
e^{\hat \alp(t)} \\
& \quad =
f_{m,n} e^{-n \be(t)} -d_{m,n}e^{-\be(t)}-b_{m,n}. \\
\end{split}
\end{equation}
Because $f_{m,n}$ is strictly positive for 
$1 \leq m\leq M$ and $1\leq n \leq N$, there is an open interval
$J$ such that the image $\{\be(t) : t \in J\}$
contains an open interval and the right-hand side of \eqref{e_x}
is non-zero for all $t \in J$, $1 \leq m \leq M$ and $2 \leq n \leq N$,
and hence the same is true for the left-hand side.

Taking \eqref{e_x} with the indices $(m,n)$ replaced by another pair $(i,j)$, we see that if
$t \in J$, $1 \leq m,i \leq M$ and $2 \leq n,j \leq N$, then
\begin{equation}
\label{e_xx}
\begin{split}
&\left[f_{m,n}e^{-n \be(t)}-d_{m,n}e^{-\be(t)}-b_{m,n}\right] \\
& \quad \times 
\left[a_{i,j}+c_{i,j}e^{-\check \alp(t)}+e_{i,j}e^{-\check \alp(t)}e^{-\be(t)}-g_{i,j}e^{-\check \alp(t)} e^{-i \be(t)}\right] \\
& \qquad = 
\left[f_{i,j} e^{-j \be(t)}-d_{i,j}e^{-\be(t)}-b_{i,j}\right] \\
& \qquad \quad \times
\left[a_{m,n}+c_{m,n}e^{-\check \alp(t)}+e_{m,n}e^{-\check \alp(t)}e^{-\be(t)}-g_{m,n}e^{-\check \alp(t)}e^{-m\be(t)}\right]. \\
\end{split}
\end{equation}

Re-arranging \eqref{e_xx} gives 
\begin{equation}
\label{e_y}
p(e^{-\be(t)}; m,n,i,j) e^{-\check \alp(t)} = q(e^{-\be(t)}; m,n,i,j),
\end{equation}
where
\[
\begin{split}
p(z;m,n,i,j) & := (c_{i,j}+e_{i,j}z-g_{i,j}z^i)(f_{m,n}z^n-d_{m,n}z-b_{m,n}) \\ 
& \quad -(c_{m,n}+e_{m,n}z-g_{m,n}z^m)(f_{i,j}z^j-d_{i,j}z-b_{i,j}) \\
\end{split}
\]
and
\[
q(z;m,n,i,j) := a_{m,n}(f_{i,j}z^j-d_{i,j}z-b_{i,j})-a_{i,j}(f_{m,n}z^n-d_{m,n}z-b_{m,n}).
\]
Suppose now that $2 \leq m,i \leq M$ and $2 \leq n,j \leq N$.
The leading term of the polynomial $p(z;m,n,i,j)$ is $-g_{i,j} f_{m,n} z^{i+n}$ if $i+n > m+j$
and $g_{m,n} f_{i,j} z^{m+j}$ if $i+n < m+j$ (recall that $f_{m,n}, g_{m,n}, f_{i,j}, g_{i,j}$
are all strictly positive).  Therefore, by taking a subinterval of $J$ if necessary, 
when $i+n \ne m+j$ we may suppose that
$J$ retains the properties required of it above and, moreover, that
both sides of \eqref{e_y} are non-zero for all $t \in J$.
In particular, either $a_{m,n} \ne 0$ or $a_{i,j} \ne 0$ and the polynomial
$q(z;m,n,i,j)$ has degree either $n$ or $j$ when $n \ne j$.

Consider two 4-tuples $(m',n',i',j')$ and $(m'',n'',i'',j'')$ with
\begin{equation}\label{e_ind}
\left\lbrace
 	\begin{aligned}
 	& 2 \le m', m'', i', i'' \le M \\
 	& 2 \le n', n'', j', j'' \le N \\
 	& i'+n' \ne m'+j'\\
 	& i''+n'' \ne m''+ j'' \\
 	& n' \ne j' \\
 	& n'' \ne j''.
	\end{aligned}
 \right.
\end{equation}	
We conclude from \eqref{e_y} that
\begin{equation}
\label{E:4_index}
p(z; m',n',i',j') q(z; m'',n'',i'',j'') = p(z; m'',n'',i'',j'') q(z; m',n',i',j')
\end{equation}
for all $z$ in an open interval.  The left-hand side of \eqref{E:4_index} is a polynomial
in $z$ of degree either 
$((i'+n') \vee (m'+j')) + n''$ or $((i'+n') \vee (m'+j')) + j''$, 
whereas the right-hand size has degree either 
$((i''+n'') \vee (m''+j'')) + n'$ or $((i''+n'') \vee (m''+j'')) + j'$.
For $(m',n',i',j')=(2,2,2,3)$ and $(m'',n'',i'',j'')=(4,2,4,3)$ we have
\[
\left\lbrace
 	\begin{aligned}
 	&i'+n' = 4 \ne 5 = m'+j' \\
 	&i''+n'' = 6 \ne 7 = m''+j'' \\
 	&n' = 2 \ne 3 = j' \\
 	&n'' = 2 \ne 3 = j'' \\
 	&((i'+n') \vee (m'+j')) + n'' = 5 + 2 = 7\\
 	&((i'+n') \vee (m'+j')) + j'' = 5 + 3 = 8 \\
 	&((i''+n'') \vee (m''+j'')) + n' = 7 + 2 = 9 \\
 	&((i''+n'') \vee (m''+j'')) + j' = 7 + 3 = 10,
	\end{aligned}
 \right.
\]
and so the possible degrees of the left-hand side of \eqref{E:4_index} are $7$ and $8$,
whereas the possible degrees of the right-hand side are $9$ and $10$.

Therefore, the function $\beta$ must be a constant, say $\beta^*$.  
Note that $\beta^*$ is just the pre-specified value for $\beta(T)$.
We now show that $\beta^*=0$.

For the moment, consider Model III with $\beta(T) = 0$, so that the function
$\beta$ must be identically zero and the firms evolve independently.  
In this special case, we know from the Introduction that 
\begin{equation}
\label{E:hatalpha0}
\exp(\hat \alpha(t)) = (1 + \exp(\hat \alpha(T)))^{\frac{t}{T}} - 1
\end{equation}
and
\begin{equation}
\exp(\check \alpha(t)) = (1 + \exp(\check \alpha(T)))^{\frac{t}{T}} - 1.
\end{equation}
%Moreover, in this case
%\[
%\lr_{m,n} = (M-m)  \frac{1}{T} \log(1 + \exp(\hat \alpha(T)))
%\]
%and
%\[
%\lu_{m,n} = (N-n)  \frac{1}{T} \log(1 + \exp(\check \alpha(T))).
%\]
It follows from the linear independence of the functions 
$\exp(c_1 \cdot), \ldots, \exp(c_h \cdot)$
when $c_1, \ldots, c_h$ are distinct that in this case the functions
$\exp(\hat \alpha)$, $\exp(\check \alpha)$ and $\exp(\hat \alpha + \check \alpha)$
are linearly independent when $\hat \alpha(T) \ne \check \alpha(T)$.

Now return to the case of a general value for $\beta^* = \beta(T)$.
Equation \eqref{ek_bi_2} becomes
\begin{equation}
\label{ek_bi_2_const}
\begin{split}
m\hat \alp'(t)+n\check \alp'(t)
& =r-(\lr_{m,n}+\lu_{m,n}) \\
& \quad +\frac{m\lr_{m-1,n}}{M-m+1}e^{-\hat \alp(t)-n\beta^*}
+\frac{n\lu_{m,n-1}}{N-n+1}e^{-\check \alp(t)-m\beta^*}. \\
\end{split}
\end{equation}
We can solve \eqref{e_alp} for the $\hat \alpha$ and $\check \alpha$ as in
Section~\ref{S:generalities} to get 
\begin{equation}\label{e_alpe}
\left\lbrace
 	\begin{aligned}
\hat \alp(t)
&=\log\left(\frac{\lr_{0,0}}{M(r-\lu_{1,0}-\lr_{1,0})}(e^{(r-\lu_{1,0}-\lr_{1,0})t}-1)\right)\\
\check \alp(t)
		 &=\log\left(\frac{\lu_{0,0}}{N(r-\lu_{0,1}-\lr_{0,1})}(e^{(r-\lu_{0,1}-\lr_{0,1})t}-1)\right)
	\end{aligned}
 \right.
\end{equation}
Substituting \eqref{e_alpe} into \eqref{ek_bi_2_const} gives
\begin{equation}
\label{E:alph_in}
\begin{split}
&m(r-\lu_{1,0}-\lr_{1,0})+n(r-\lu_{0,1}-\lr_{0,1})+(\lr_{m,n}+\lu_{m,n})-r \\
& \quad + \left(\frac{m\lr_{0,0}}{M}-\frac{m\lr_{m-1,n}e^{-n\beta^*}}{M-m+1}\right)\frac{M(r-\lu_{1,0}-\lr_{1,0})}{\lr_{0,0}(e^{(r-\lu_{1,0}-\lr_{1,0})t}-1)}\\ 
& \quad + \left(\frac{n\lu_{0,0}}{N}-\frac{n\lu_{m,n-1}e^{-m\beta^*}}{N-n+1}\right)\frac{N(r-\lu_{0,1}-\lr_{0,1})}{\lu_{0,0}(e^{(r-\lu_{0,1}-\lr_{0,1})t}-1)}
=0. \\
\end{split}
\end{equation}

It follows from the observations above and a compactness argument that if 
$\hat \alpha(T) \ne \check \alpha(T)$ and
$\beta^* = \beta(T)$ is sufficiently close to zero, then the functions
$\exp(\hat \alpha)$, $\exp(\check \alpha)$ and $\exp(\hat \alpha + \check \alpha)$
are linearly independent.  Suppose that this is the case.
Equation \eqref{E:alph_in} is of the form
\begin{equation*}
a + b e^{-\hat \alpha(t)} + c e^{-\check \alpha(t)} = 0
\end{equation*}
for suitable constants $a,b,c$,
and hence $a=b=c=0$.
Thus,  
\begin{eqnarray*}
\frac{m\lr_{0,0}}{M} - \frac{m\lr_{m-1,n}e^{-n\beta^*}}{M-m+1}&=&0\\
\frac{n\lu_{0,0}}{N} - \frac{n\lu_{m,n-1}e^{-m\beta^*}}{N-n+1}&=&0\\
m(r-\lr_{1,0}-\lu_{1,0})+n(r-\lr_{0,1}-\lu_{0,1})+(\lr_{m,n}+\lu_{m,n})-r &=&0
\end{eqnarray*}
for $(m,n)\in \{0,\dots,M\}\times\{0,\dots,N\}$,
and so, after some algebra,
\[
\lr_{0,0}\frac{M-m}{M}e^{n\beta^*}
+
\lu_{0,0}\frac{N-n}{N}e^{m\beta^*}
= 
r-m(r-\lr_{1,0}-\lu_{1,0})-n(r-\lr_{0,1}-\lu_{0,1})
\]
for $(m,n)\in \{0,\dots,M\}\times\{0,\dots,N\}$.
In particular, considering $(m,n) = (k,k)$ for $0 \le k \le M \wedge N$
leads to a system of the form
\[
A k e^{k \beta^*} + B e^{k \beta^*} + C k + D = 0
\]
for suitable constants $A,B,C,D$ with $A>0$ and $B>0$.
A straight line can intersect the graph of the function
$t \mapsto A t e^{\beta^* t} + B e^{\beta^* t}$ at most
twice if $\beta^* > 0$ and at most three times if $\beta^* < 0$,
and since $M \wedge N \ge 3$ we must have $\beta^*=0$.
\end{proof}

%\begin{remark}
%The hypothesis that the functions $\exp(\hat \alpha)$, $\exp(\check \alpha)$
%and $\exp(\hat \alpha + \check \alpha)$ are linearly independent
%may be readily seen to be equivalent to the hypothesis that
%\[
%(r-\lr_{1,0}-\lu_{1,0}) \ne \pm (r-\lu_{0,1}-\lr_{0,1}).
%\]
%\end{remark}

\providecommand{\bysame}{\leavevmode\hbox to3em{\hrulefill}\thinspace}
\providecommand{\MR}{\relax\ifhmode\unskip\space\fi MR }
% \MRhref is called by the amsart/book/proc definition of \MR.
\providecommand{\MRhref}[2]{%
  \href{http://www.ams.org/mathscinet-getitem?mr=#1}{#2}
}
\providecommand{\href}[2]{#2}

\end{document}